\documentclass[11pt]{article}
\usepackage{commands}

\date{}
\begin{document}
\title{\textbf{Average-Radius List-Decodability of \\Random Linear Codes}}
\author{
Venkatesan Guruswami\thanks{{Departments of EECS and Mathematics, and the Simons Institute for the Theory of Computing, UC Berkeley, Berkeley, CA, 94709, USA. Email: \url{venkatg@berkeley.edu}. Research supported by a Simons Investigator Award and NSF grant CCF-2211972.}}
\and 
Shilun Li\thanks{Department of Mathematics, UC Berkeley, Berkeley, CA, 94709, USA. Supported in part by DOD Advanced Research Projects Agency grant HR0011262E031.  Email: \url{shilun@berkeley.edu}.} 
\and
Mihir Singhal\thanks{Departments of EECS, UC Berkeley, Berkeley, CA, 94709, USA. E-mail: \texttt{mihirs@berkeley.edu}. This author was supported by in part by an NSF GRFP Fellowship and V.G's Simons Investigator award.} 
}
\maketitle
\thispagestyle{empty}
\begin{abstract}
We prove that for every prime power $q$ and every $p \in (0, 1-1/q)$, a random $\F_q$-linear code of rate $1 - h_q(p) - \epsilon$ is $(p, C_{p,q}/\epsilon)$-average-radius list-decodable with probability at least $1 - q^{-\Omega(n)}$, i.e., for every center $y \in \F_q^n$, the $C_{p,q}/\epsilon$ codewords closest to $y$ have average fractional Hamming distance at least $p$ from $y$. This extends a similar result for (standard) list-decoding due to Guruswami, H\r{a}stad, and Kopparty (2010) to the stronger average-radius guarantee, with the same $O(1/\epsilon)$ list size. For average-radius list-decoding, such a result was previously known only for binary linear codes (Guruswami, Li, Mosheiff, Resch, Silas, and Wootters, 2021) and for general (non-linear) random codes over arbitrary alphabets (Elias, 1991). 
\end{abstract}

\section{Introduction}

This work concerns the (combinatorial) list-decodability properties of random linear codes, for the stronger average-radius variant, at rates close to capacity. We describe the context and prior works, before stating our main result, which is \Cref{thm:main}.

\paragraph{List decoding and random linear codes.}
List decoding, introduced independently by Elias~\cite{elias1957list} and Wozencraft~\cite{wozencraft1958list} in the late 1950s, relaxes the classical requirement of unique decoding by allowing the decoder to output a short list of candidate codewords, one of which is guaranteed to be the transmitted word. A code $\mathcal{C} \subseteq \mathbb{F}_q^n$ is \emph{$(p, L)$-list-decodable} if every Hamming ball of radius $pn$ contains at most $L$ codewords. A standard volume-packing argument shows that the rate must satisfy $R \le 1 - \Hq(p) + o(1)$, and this \emph{list-decoding capacity} is in fact achievable: a random code of rate $R = 1 - \Hq(p) - \epsilon$ is $(p, O(1/\epsilon))$-list-decodable with high probability~\cite{elias1991error}. 
However, a general random code has little structure, and so it is important to pursue the goal of achieving list-decoding capacity with more structured codes.

Linear codes---subspaces of $\F_q^n$ admitting succinct generator-matrix representations---form the canonical structured code family, and are widely used as inner codes in concatenated constructions and as algebraic building blocks in explicit constructions. The question of whether \emph{linear} codes match this performance for list-decoding was more challenging and remained open. Zyablov and Pinsker~\cite{zyablov1981list} showed that a random linear code in $\mathbb{F}_q^n$ of rate $1 - \Hq(p) - \epsilon$ is $(p, q^{O(1/\epsilon)})$-list-decodable, but the exponential dependence on $1/\epsilon$ for the list size was much worse than for general codes.   In addition to its coding-theoretic relevance, the list-decodablity of random linear codes is also a basic question in probabilistic combinatorics concerning the relationship between two basic structures---Hamming balls and subspaces.
The challenge in analyzing the list-decodability of linear codes is the linear dependencies between codewords, and how that impacts correlations with respect to proximity to some center.

Guruswami, H\r{a}stad, Sudan, and Zuckerman~\cite{guruswami2002combinatorial} subsequently showed the \emph{existence} of capacity-achieving binary linear codes with list size $O(1/\epsilon)$, though without a high-probability guarantee over random binary linear codes.
Later, Li and Wootters~\cite{li2020improved} extended this argument to show that a random binary linear code of rate $1 - h(p) - \epsilon$ is $(p, h(p)/\epsilon + 2)$-list-decodable with high probability for every $p \in (0, 1/2)$. This list size is also within $2$ of the actual value attained (with high probability) by random binary linear codes~\cite{guruswami2021bounds}.
It is also known that the list size needs to be at least $\Omega(\log (1/\eps))$ for an \emph{arbitrary} binary code of rate $1-h(p)-\eps$~\cite{blinovsky1986bounds,guruswami2014combinatorial}; the exponential gap between the lower and upper bounds for the optimal list size persists.\footnote{For average-radius list-decoding, which is the focus of this paper, the gap is only quadratic; see Section~\ref{sec:related-works}.}

For non-binary alphabets, Guruswami, H\r{a}stad, and Kopparty~\cite{guruswami2010list} proved that a random $\mathbb{F}_q$-linear code of the same rate achieves list size $C_{p,q}/\epsilon$ with probability at least $1 - q^{-\Omega(n)}$. The constant $C_{p,q}$ degrades with large $q$ and also for any fixed $q$ as $p \to 1-1/q$, but for any fixed $p,q$ and $\eps \to 0$, the result showed that a list size of $O(1/\eps)$ can be attained, matching the bound for general codes.

There is also a large and influential line of work on list-decodability of linear codes over large alphabets, driven in part by whether Reed-Solomon codes with random evaluation points achieve list-decoding capacity. Here we just mention the result from \cite{Alrabiah_2025} showing that a random rate-$R$ linear code  over an alphabet of size $2^{O(1/\epsilon^2)}$ is $(1 - R - \epsilon, O(1/\epsilon))$-list-decodable with high probability, and point the reader to \cite{Alrabiah_2025} and references therein for further information.

\paragraph{Average-radius list-decoding.}
We return to our focus on the fixing alphabet regime, aimed at  
understanding the behavior of random linear codes of rate approaching $1 - \Hq(p)$. The specific goal in this work concerns a stronger form of list-decoding called \emph{average-radius} list-decoding. A code being $(p,L)$-list-decodable means that for any $L+1$ codewords and any center, the \emph{maximum} Hamming distance between the codewords and the center exceeds $pn$. For average-radius list-decoding we demand that the same condition holds for the \emph{average} Hamming distance between the codewords and any center (which is a stronger requirement, as the average never exceeds the maximum).

Formally, we say that $\mathcal{C} \subseteq \F_q^n$ is \emph{$(p, L)$-average-radius list-decodable} if for every center $y \in \mathbb{F}_q^n$, the $L+1$ nearest codewords to $y$ have \emph{average} distance more than $pn$ from $y$. 

A $(p, L)$-average-radius list-decodable code is trivially also $(p, L)$-list-decodable. In the other direction, a simple argument shows that if $\mathcal{C}$ is $(p,L)$-list-decodable code, then for any $\gamma > 0$, it is also $(p-\gamma,L/\gamma)$-average-radius list-decodable~\cite[Appendix E]{RudraW15}. However, this incurs a penalty in the list size. 

%For instance, applying this to the $(p, O(1/\epsilon))$-list-decoding bound of \cite{guruswami2010list} only yields $(p-\epsilon,\, O(1/\epsilon^2))$-average-radius list-decodability of random $\F_q$-linear codes at rate $1 - \Hq(p) - \epsilon$.\mscomment{i talk about this two paragraphs down, so maybe this last sentence can be removed?}

It turns out that several proofs of list-decodability in fact already imply (or can be readily modified to yield) the stronger form of average-radius list-decoding, with the \emph{same} list-size. In fact, because the sum is nicer to work with than the maximum, it is often more natural (at least in hindsight) to analyze average-radius list-decodability.  For example, the classical Johnson bound relating the minimum distance of the code to its list-decodability also works for the average-radius version~\cite{cheraghchi2013restricted,RudraW15}. The classical argument for list-decodability of random codes can be adapted to the average-radius setting~\cite{guruswami2014combinatorial}, and this work also systematically isolated average-radius list-decodability as a notion to study. The proofs of list-decodability in the high-noise regime, when $p \to 1-1/q$, based on restricted isometry and related ideas from high-dimensional probability directly work with average-radius list-decoding~\cite{cheraghchi2013restricted,wootters2013list,RudraW14}. The analysis of the list-decodability of binary random linear codes in \cite{guruswami2002combinatorial,li2020improved} was extended, with some technical effort, to the average-radius setting in \cite{guruswami2021bounds}. 
% \cradd{Beyond its role in implying standard list-decoding, average-radius list-decodability is also of intrinsic interest: TODO ADD SOMETHING HERE}

However, the aforementioned proof of Guruswami, H\aa stad, and Kopparty~\cite{guruswami2010list} for non-binary random linear codes relied more heavily on the geometry of the Hamming ball.
%\msmargincomment{is this really the obstruction to extending to average-radius?}. \vg{Not completely, since we show it can be extended. Perhaps it should be toned down}
As such, it was not known whether it could be extended to the average-radius setting. 
The best prior known result for this problem, as observed by Rudra and Wooters in \cite{rudra2018average}, is to directly use the aforementioned observation of \cite[Appendix E]{RudraW15} to get a list size of $O(1/\e^2)$ for average-radius list-decoding at rates within $\eps$ of capacity. Rudra and Wooters in the same paper also provide a different argument that achieves a list size quasipolynomial in $\e$. The question remained of whether the $O(1/\epsilon)$ list size matching the standard list-decoding result of \cite{guruswami2010list} can be achieved for average-radius list-decoding.
%Indeed, the best known prior bound for this problem was Corollary 3.4 of \cite{rudra2018average}, which gave a list size that is quasipolynomial in $\e$. The question remained open of whether the $O(1/\epsilon)$ list size shown in \cite{guruswami2010list} can be achieved for average-radius list-decoding, as it is for standard list-decoding. 
This is precisely what we accomplish in this paper. Specifically, we analyze the average-radius list-decodability of random linear codes over $\F_q$ in the style of \cite{guruswami2010list}, and prove the following. 

\begin{thm}[Main]\label{thm:main}
Let $q$ be a prime power, $p \in (0, 1-1/q)$, and $\e > 0$ sufficiently small. For all $n$ sufficiently large, a random $\F_q$-linear code of rate $R = 1 - \Hq(p) - \e$ is $(p,\, C_{p,q}/\e)$-average-radius list-decodable with probability at least $1 - q^{-5n}$, where $C_{p,q}$ is a constant depending only on $p$ and $q$.
\end{thm}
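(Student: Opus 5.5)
We follow the strategy of Guruswami, H\aa stad, and Kopparty~\cite{guruswami2010list}, replacing the Hamming ball by an averaged quantity throughout. Fix $L = C_{p,q}/\e$ and a center $y$. For a list $(c_1,\dots,c_{L+1})$ of distinct codewords, the bad event is $\sum_i \Delta(y,c_i) \le pn(L+1)$. Since $\mathcal C$ is linear, it suffices to bound, for each $y$, the probability that such a list exists. We union-bound over $y$ and over the linear span structure of the list.

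The plan is to grow the random code one generator at a time, as in~\cite{guruswami2010list}. Let $\mathcal C_0 \subset \mathcal C_1 \subset \cdots \subset \mathcal C_k$, where $\mathcal C_{j}=\mathcal C_{j-1}+\mathrm{span}(b_j)$ and the $b_j$ are uniform in $\F_q^n$. The key potential in~\cite{guruswami2010list} is $S_y(\mathcal C)=|B(y,pn)\cap \mathcal C|$ averaged over $y$, controlled through $\mathbb E_y q^{\lambda S}$-type potentials. Here we instead define an \emph{average-radius weight}. For a code $\mathcal C$ and center $y$, let
\[ W_y(\mathcal C)=\sum_{c\in\mathcal C} q^{\beta(pn-\Delta(y,c))}, \]
a soft count that upper bounds $q^{\beta\cdot(\text{excess})}$ for any $L+1$ codewords: if $L+1$ codewords have average distance at most $pn$, then by AM--GM $W_y\ge (L+1)$. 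The potential is then $\Phi(\mathcal C)=\mathbb E_y\, W_y(\mathcal C)^{\,t}$ for suitable $t$ (or an exponential moment). The choice of $\beta>0$ is tuned so that $\mathbb E_{c}\, q^{\beta(pn-\Delta(y,c))} \approx q^{-(1-h_q(p))n}$ up to $q^{o(n)}$; this is the tilting that turns average distance into a large-deviation computation with the same exponent as the ball volume.

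The key step is the one-generator recursion. Adding $b$ replaces $W_y(\mathcal C)$ by $\sum_{\alpha\in\F_q}W_{y-\alpha b}(\mathcal C)$. Following~\cite{guruswami2010list}, we show that for most $b$ the potential grows by only a bounded factor, except when $W$ is already concentrated. The needed ingredient is a ``2-increasing'' or convexity lemma: for the tilted weight $f(y)=q^{\beta(pn-|y|)}$ smoothed over $\mathcal C$, the expectation $\mathbb E_b \mathbb E_y \prod_\alpha$-type moments satisfy
\[ \mathbb E_b\,\Phi(\mathcal C+\langle b\rangle)\le q^{\,(\text{small})}\,\Phi(\mathcal C)^{\,\text{appropriate power}}. \]
This replaces the geometric ball-intersection estimate of~\cite{guruswami2010list} by a Fourier/Hölder estimate on the product weight $q^{\beta(pn-\Delta)}$, which factorizes over coordinates. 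This factorization is exactly what makes average-radius easier: $f$ is a product function, so correlations between $y$ and $y-\alpha b$ reduce to single-coordinate inequalities in $\F_q$.

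After iterating for $k=Rn$ steps, we obtain with probability $1-q^{-5n}$ (via Markov's inequality on the potential, plus a union bound over the $O(n)$ stages) that $\max_y W_y(\mathcal C)\le q^{O(1)}\cdot(\text{const})$ times a factor $q^{\e n\cdot(\cdot)}$ balanced against the rate slack. Converting this into the list bound requires a further step, because AM--GM alone only gives $W_y\ge L+1$. To get $L=O(1/\e)$, we use the slack from $\beta$: applying the tilt at radius $p$ with rate gap $\e$ yields $W_y\le C'/\e$, which contradicts $W_y \ge L+1$ once $C_{p,q}>C'$.

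I expect the main obstacle to be the one-step recursion, that is, proving the analogue of the GHK lemma that adding a random generator does not blow up the potential more than by $q^{O(\e)}$-type factors uniformly. The first attempt would be to establish the single-coordinate inequality for the tilted kernel over $\F_q$ and tensorize it. If the inequality fails for the plain exponential tilt, the fallback is to truncate: restrict the sum defining $W_y$ to codewords within distance $(p+\delta)n$, and handle the far codewords separately using the fact that they contribute exponentially small weight.
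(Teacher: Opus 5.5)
Your proposal has a genuine gap: the final step that converts a bound on the tilted count into a list-size bound is false for every code. Since your tilt has $\beta>0$, the single term $c=y$ gives $W_y(\Cc)\ge q^{\beta pn}$ for every $y\in\Cc$ (e.g.\ $y=0$). So $\max_y W_y(\Cc)$ is exponentially large for \emph{every} code, no bound like $W_y\le C'/\e$ can ever hold, and the AM--GM inequality $W_y\ge L+1$ is never contradicted. The exponential weight lets one very close codeword dominate. The average-radius condition, by contrast, charges each codeword only its linear deficit $pn-\Delta(y,c)\le pn$. Truncating \emph{far} codewords does not help, because the blow-up comes from the near ones.

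Separately, the one-generator recursion is asserted rather than proved, and it is exactly the known obstruction for $q>2$. With an exponential-moment potential, $\Phi(\Cc+\langle b\rangle)=\mathbb{E}_y\prod_{\alpha\in\F_q}g(y-\alpha b)$ with $g(y)=e^{\lambda W_y(\Cc)}$. Over $\F_2$, averaging over $b$ makes $y$ and $y+b$ independent, which gives exactly $\Phi^2$. Over $\F_q$, the $q$ points $y-\alpha b$ on a line are only pairwise independent. If $g$ is the indicator of a codimension-$d$ subspace, the line average is $q^{-2d}$ while $(\mathbb{E}g)^q=q^{-qd}$. This is why the potential method of \cite{guruswami2002combinatorial,li2020improved}, and its average-radius extension in \cite{guruswami2021bounds}, is binary-specific. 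Your "factorization over coordinates" applies to $f(y)=q^{\beta(pn-|y|)}$ alone. It does not apply to $W_y(\Cc)=\sum_{c\in\Cc}f(y-c)$ or any nonlinear function of it, so nothing reduces to a single-coordinate inequality. Also, \cite{guruswami2010list} is not a generator-by-generator potential argument; it uses the span-ball theorem plus an increasing-supports argument.

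The paper avoids potentials entirely. After rephrasing to rate $1-\Hq(p)$ and radius $p-2\e$ (\cref{thm:main-equiv}), it uses two ingredients:
\begin{itemize}
\item \cref{lem:main} combines the span-ball theorem of \cite{guruswami2010list}, the inclusion probability $q^{-\Hq(p)\ell n+O(\ell)}$ of an independent $\ell$-tuple, and a union bound. It shows that every independent $S\subseteq\Ball(0,p)\cap\spn(\Cc,x)$ of size $\ell\le n^{1/3}$ has $|\spn(S)\cap\Ball(0,p)|\le D\ell$.
\item \cref{lem:random-code-bound} is a Jensen/entropy count. It shows that no independent set of size at most $L=10/(\e\Hq'(p))$ in $\spn(\Cc,x)$ has total \emph{linear} deficit exceeding $\e Ln$.
\end{itemize}
A greedy basis extracted from the $DL$ nearest elements of $\spn(\Cc,x)$ must gain a new vector at least every $D$ positions. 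Hence those elements have total deficit at most $Dpn+D\e Ln$, which gives list size $DL=C'_{p,q}/\e$.
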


Before we present a brief overview of the proof approach in Section~\ref{sec:pf-overview}, we discuss some other related works.

\begin{table}[h]
\centering
\begin{tabular}{cc}
\toprule
Source & List size $L$ \\
\midrule
Uniform random code  & $O(1/\e^2)$ \\
\cite{zyablov1981list} & $q^{O(1/\e)}$ \\
\cite{guruswami2010list} + \cite{RudraW15} & $O(1/\e^2)$ \\
\cite{rudra2018average} & $q^{C\log^2(1/\e)}$ \\
\cite{guruswami2021bounds} ($q = 2$ only) & $h(p)/\e + 2$ \\
\textbf{This work} & $C_{p,q}/\e$ \\
\bottomrule
\end{tabular}
\caption{Known upper bounds on average-radius $(p, L)$-list-decodability of random linear codes with rate $R=1 - \Hq(p) - \e$, for fixed alphabet size $q$ and decoding radius $p \in (0, 1-1/q)$, as $\e \to 0$.}
\label{tab:avg-radius-ld}
\end{table}

%by closely inspecting the proof of the Johnson bound, \cite{cheraghchi2013restricted} proved an average-radius version of the Johnson bound which related the distance of the c
%{\color{red}This formulation was systematically isolated by Guruswami, Håstad, Sudan, and Zuckerman~Many arguments establishing the list-decodability of random codes in fact prove this stronger average-radius guarantee.  However, the proof of \cite{guruswami2010list} on list-decodability of random linear codes does not immediately extend to average-radius list-decodability. 

%\vg{Say that many of the list-decoding proofs for random codes also establish average-radius list-decoding. Cite \cite{guruswami2002combinatorial} for systemtcially isolating average-radius version. Johnson bound holds for average-radius list-decoding (cite \cite{cheraghchi2013restricted}??). However the GHK proof does not (immediately) extend. Goal of this work is to extend it. State the main theorem.}

\subsection{Related works}
\label{sec:related-works}

\noindent\textbf{Limitations of average-radius list-decoding.}
In \cite{guruswami2014combinatorial}, the authors proved that a binary code of rate $1-h(p)-\eps$ cannot be $(p,o(1/\sqrt{\eps}))$-average-radius list-decodable. Thus, for average-radius list-decoding, we have a lower bound of $\Omega(1/\sqrt{\eps})$ on list-size which is only quadratically worse than the $O(1/\eps)$ upper bound. In contrast, the best lower bound for list-decoding at rates within $\eps$ of list decoding capacity $1-h(p)$ is $\Omega(\log (1/\eps))$, which is exponentially far from the $O(1/\eps)$ upper bound.

\medskip\noindent\textbf{Thresholds for local properties of random linear codes.}
A highly successful line of recent work has placed list-decodability of random linear codes within a broader structural framework. Mosheiff, Resch, Ron-Zewi, Silas, and Wootters~\cite{mosheiff2020ldpc} introduced the notion of \emph{local properties}: properties defined by excluding any member of a permutation-invariant family of small forbidden subsets of $\mathbb{F}_q^n$. They showed that list-decodability, list-recovery, and average-radius list-decodability are local properties, and proved that every such monotone-decreasing local property has a sharp threshold rate for random linear codes. This framework has been used to transfer random-linear-code guarantees to more structured ensembles, including Gallager's LDPC codes~\cite{mosheiff2020ldpc} and random puncturings of low-bias mother codes due to Guruswami and Mosheiff~\cite{guruswami2022punctured}. A large-alphabet variant based on \emph{local coordinate-wise linear} (LCL) properties was later used by Levi, Mosheiff, and Shagrithaya~\cite{levi2025random} to show a local equivalence between random Reed--Solomon codes and random linear codes for such properties, including list-decodability and list-recoverability. While these frameworks guarantee and characterize threshold rates abstractly, explicitly computing the threshold for a concrete property---for example, the optimal list size $L$ for $(p, L)$-average-radius list-decoding at rate $1 - \Hq(p) - \e$---requires additional property-specific analysis.

\iffalse
\paragraph{Our contribution.}
In this paper we analyze the average-radius list-decodability of random linear codes over $\F_q$ in the style of \cite{guruswami2010list}, and prove the following.

\begin{thm}\label{thm:main}
Let $q$ be a prime power, $p \in (0, 1-1/q)$, and $\e > 0$ sufficiently small. For all $n$ sufficiently large, a random $\F_q$-linear code of rate $R = 1 - \Hq(p) - \e$ is $(p,\, C_{p,q}/\e)$-average-radius list-decodable with probability at least $1 - q^{-5n}$, where $C_{p,q}$ is a constant depending only on $p$ and $q$.
\end{thm}

This strengthens GHK's $(p, C_{p,q}/\e)$-list-decodability~\cite{guruswami2010list} to the stronger average-radius guarantee with the same asymptotic list size.
\noindent
This strengthens GHK's $(p, C_{p,q}/\e)$-list-decodability~\cite{guruswami2010list} to the strictly stronger average-radius guarantee with the same list size, and generalizes the corresponding binary result of~\cite{guruswami2021bounds} to all prime powers $q$. The lower bound $L \ge \Omega_p(1/\sqrt{\e})$ of~\cite{guruswami2014combinatorial} (proved for $q = 2$) applies here as well; the precise optimal scaling of $L$ in $\e$ remains open. Methodologically, the GHK proof relies heavily on the geometry of the Hamming ball and appears tailored to the standard worst-case setting; our principal technical contribution is a reworking of its ``Span-Ball" machinery that yields the average-radius guarantee for all prime powers $q$.
\fi 

\subsection{Proof overview}
\label{sec:pf-overview}
We first recall the strategy of Guruswami, H\aa stad, and Kopparty~\cite{guruswami2010list} (henceforth GHK) for \emph{standard} list-decoding, and then isolate the new ingredient needed for the average-radius guarantee. For a fixed center, GHK bound the number of codewords inside a single Hamming ball $\Ball(0,p)$ by combining two facts: their span-ball theorem (any small linearly independent set of ball vectors spans only $O(1)$ times as many ball vectors), and a counting bound on how many such sets a random linear code can contain. To finish, they use an ``increasing-supports'' argument that extracts, from a linearly independent set, a subsequence whose weights grow noticeably. This last step is tailored to the fixed-radius setting, in which every codeword under consideration genuinely lies in the ball. In the average-radius setting the nearest codewords need only be close \emph{on average}, so this geometric step no longer applies; this is the essential obstacle to porting GHK. Our two new ingredients replace the geometric nature of the GHK argument. We use (i) a Jensen/entropy counting bound (\cref{lem:random-code-bound}) that controls the \emph{total deficit} of a small linearly independent set, and (ii) a greedy-basis argument that transfers this bound to the actual nearest codewords. We now sketch the argument in more detail.

As in \cite{guruswami2010list}, we analyze an arbitrary center $x$ by translating and studying
the low-weight elements of $\spn (C,x)$. Two standard ingredients give
the needed control on small linearly independent sets. First, the span-ball
theorem of \cite{guruswami2010list} implies that, with high probability, the span of any small linearly independent set
\[
  S \subseteq \Ball(0,p)\cap\spn (C,x)
\]
contains only $O(|S|)$ vectors of $\Ball(0,p)$. Second, a Jensen/entropy
count shows that no linearly independent set $S$ of size at most
$L=\Theta_{p,q}(1/\e)$ in $\ensuremath{\spn(\Cc,x)}$ can have large total deficit
\[
  \sum_{v\in S}(pn-|v|).
\]

The main point, then, is to transfer this linearly independent-set statement to the actual
nearest codewords, which may be highly dependent. For a fixed $x$, order the
low-weight elements of $\spn (C,x)$ by increasing weight as
$v_1,v_2,\ldots$, and greedily extract a maximal linearly independent
subsequence $v_{j_1},v_{j_2},\ldots$. The span-ball bound forces this
greedy basis to appear regularly throughout the ordered list: after
$r-1$ basis vectors have been chosen, their span can account for only
$O(r)$ vectors in the ball, so the next basis vector must occur within the
next constant-size block. Writing $D = D(p,q)$ for the span-ball constant of \cref{lem:main} (that is, $D = C_q$ from the span-ball theorem, \cref{lem:span-ball}), the first $DL$ nearby vectors are controlled
by the first $L$ greedy basis vectors. If those $DL$ vectors had average
weight below $(p-2\e)n$, then the corresponding greedy basis vectors would
have total deficit exceeding $\e Ln$, contradicting the entropy-counting
lemma. This gives the desired average-radius bound.

\paragraph{Organization.}
Following this introduction, we state the two key lemmas and prove them in the subsequent subsections: the span-ball bound (\cref{lem:main}), the Jensen entropy counting bound (\cref{lem:random-code-bound}), and finally the main theorem (\cref{thm:main}).

\section{Preliminaries}
For conciseness, we will omit floor and ceiling signs and treat quantities as integers when the analysis would be essentially unaffected.

We define $\Hq(r) := r \log_q(q-1) - r \log_q r - (1-r)\log_q(1-r)$ as the $q$-ary entropy function. For the binary case, we simply write $h(r) := h_2(r)$.

We write $|v|$ for the Hamming weight (number of non-zero coordinates) of a vector $v \in \F_q^n$. For $x \in \F_q^n$, we write $\Ball(x, r)$ for the Hamming ball of radius $rn$ around $x$, i.e., $\{y \in \F_q^n : |y-x| \le rn\}$.

It is a well-known fact that the volume of a Hamming ball is bounded by $q^{\Hq(r)n}$:
\begin{lemma}[Volume of Hamming balls, {\cite[Proposition~3.3.3]{guruswami2019essential}}]
\label{lem:hamming-ball-volume}
For every $x \in \F_q^n$ and every $r \in [0,1-1/q]$,
\[
  \abs{\Ball(x,r)}
  \le q^{\Hq(r)n}.
\]
\end{lemma}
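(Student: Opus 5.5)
The statement just before this point is the Hamming-ball volume bound, \Cref{lem:hamming-ball-volume}: $|\Ball(x,r)|\le q^{\Hq(r)n}$ for $r\le 1-1/q$. The plan is the standard probabilistic (Chernoff-type) argument. By translation invariance we may take $x=0$. The ball then consists of the vectors of weight at most $rn$, so
\[
|\Ball(0,r)|=\sum_{i=0}^{rn}\binom{n}{i}(q-1)^i .
\]
We compare this sum with the binomial expansion of $1=(r+(1-r))^n$, after distributing the mass $r$ evenly over the $q-1$ nonzero symbols.

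First treat $r=0$, which is trivial, and $r=1-1/q$. In the latter case $\Hq(1-1/q)=1$ by direct computation, so the bound is just $|\Ball|\le q^n$. For $0<r<1-1/q$ we write
\[
1=\bigl(r+(1-r)\bigr)^n=\sum_{i=0}^n\binom{n}{i}(q-1)^i\Bigl(\frac{r}{q-1}\Bigr)^i(1-r)^{n-i}\ge \sum_{i\le rn}\binom{n}{i}(q-1)^i\Bigl(\frac{r}{q-1}\Bigr)^i(1-r)^{n-i}.
\]
Rewrite each summand weight as $(1-r)^n\bigl(\tfrac{r}{(q-1)(1-r)}\bigr)^i$. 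The condition $r<1-1/q$ is equivalent to $r<(q-1)(1-r)$, so the ratio $\tfrac{r}{(q-1)(1-r)}$ is less than $1$. Hence the weight is decreasing in $i$, and for $i\le rn$ it is at least its value at $i=rn$, namely $\bigl(\tfrac{r}{q-1}\bigr)^{rn}(1-r)^{(1-r)n}$.

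This gives
\[
1\ge |\Ball(0,r)|\cdot\Bigl(\frac{r}{q-1}\Bigr)^{rn}(1-r)^{(1-r)n}.
\]
Taking $\log_q$, the factor multiplying $|\Ball(0,r)|$ equals exactly $q^{-\Hq(r)n}$ by the definition of $\Hq$, which yields the claim.

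The only delicate point is the monotonicity step, which is precisely where the hypothesis $r\le 1-1/q$ is used. Integrality of $rn$ is handled by the paper's convention on floors; otherwise the argument goes through with $i\le\lfloor rn\rfloor$ and the same monotonicity.
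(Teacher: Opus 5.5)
Your argument is correct and is the standard proof of the cited result (Proposition~3.3.3 of \cite{guruswami2019essential}), which the paper quotes without reproving. That proof also expands $1=(r+(1-r))^n$, keeps the terms with $i\le rn$, and uses that $r/((q-1)(1-r))\le 1$ exactly when $r\le 1-1/q$ to bound each kept term below by the $i=rn$ term, which equals $q^{-h_q(r)n}$; your handling of the endpoints and of non-integral $rn$ via monotonicity in real $i$ is also fine.
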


We say that a code $\Cc$ is \emph{$(p, L)$-list-decodable} if, for every $x \in \F_q^n$ and all $L+1$ codewords $v_1, \dots, v_{L+1} \in \Cc$, there is some $i$ such that $|v_i - x| > pn$.

We define a code $\Cc$ to be \emph{$(p, L)$-average-radius list-decodable} if, for every $x \in \F_q^n$ and all $L+1$ codewords $v_1, \dots, v_{L+1} \in \Cc$, we have
\[
  \frac{1}{L+1} \sum_{i=1}^{L+1} |v_i - x| > pn.
\]
That is, the average distance of the $L+1$ codewords closest to any center $x$ is more than $pn$. Note that if a code $\Cc$ is $(p, L)$-average-radius list-decodable, then it is also $(p, L)$-list-decodable.

\section{Main Results}

We prove \cref{thm:main} via the following equivalent formulation, which is what our analysis directly establishes.

\begin{thm} \label{thm:main-equiv}
Let $p \in (0, 1-1/q)$ be fixed and let $\e > 0$ be sufficiently small. Let $\Cc$ be a random linear code of rate $R = 1 - \Hq(p)$. Then there is a constant $C'_{p,q}$ depending only on $p$ and $q$ such that for all $n$ large enough, with probability at least $1-q^{-5n}$, $\Cc$ is $(p-2\e, C'_{p,q}/\e)$-average-radius list-decodable.
\end{thm}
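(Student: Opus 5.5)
Fix a center $x$ and work with the random code $\Cc$ of rate $1-\Hq(p)$. By translation, a list of $L+1$ codewords with average distance at most $(p-2\e)n$ from $x$ gives $L+1$ distinct vectors $v_1,\dots,v_{L+1}\in x+\Cc\subseteq\spn(\Cc,x)$ of average weight at most $(p-2\e)n$. The plan is to show that, with probability $1-q^{-6n}$, no such configuration exists for a fixed $x$. A union bound over the $q^n$ centers then gives the claimed probability $1-q^{-5n}$. In fact $\spn(\Cc,x)$ depends on $x$ only through its coset modulo $\Cc$, which helps, but the crude union bound already suffices if each failure probability is $q^{-6n}$.

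The first two ingredients are the earlier lemmas, used as black boxes.

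\textbf{Span-ball.} By \cref{lem:main}, which rests on \cref{lem:span-ball}, with probability $1-q^{-6n}$ every linearly independent set $S\subseteq\Ball(0,p)\cap\spn(\Cc,x)$ of size at most $L$ satisfies $|\spn(S)\cap\Ball(0,p)|\le D|S|$, where $D=C_q$.

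\textbf{Entropy counting.} By \cref{lem:random-code-bound}, with probability $1-q^{-6n}$ no linearly independent $S\subseteq\spn(\Cc,x)\cap\Ball(0,p)$ with $|S|\le L$ has total deficit $\sum_{v\in S}(pn-|v|)>\e|S|n$, or something of this form with $L=\Theta_{p,q}(1/\e)$. The heuristic is that the expected number of such $S$ is about $q^{|S|Rn}\cdot q^{\sum_v \Hq(|v|/n)n}/q^{|S|n}$. By concavity of $\Hq$ (Jensen), $\sum_v \Hq(|v|/n)\le |S|\Hq(\text{avg})$. The deficit then costs a factor $q^{-\Omega(\e|S|n)}$, and the $x$-direction costs one extra dimension, which is why $|S|\gtrsim 1/\e$ is needed. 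If the lemma is stated for a fixed size, I will apply it with $|S|=L$ exactly, padding with further basis vectors if needed.

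The main step, where the real work lies, is the greedy transfer. Suppose the $N:=DL$ lowest-weight elements $u_1,\dots,u_N$ of $\spn(\Cc,x)\cap\Ball(0,p)$, sorted by weight, have average weight below $(p-2\e)n$. These can be taken among the translates $x+\Cc$, or $\Cc$ itself after scaling; any bad list gives at least $L+1$ such vectors, and I will set the list size to $DL$ to match. Scan the $u_i$ in order and greedily pick a maximal linearly independent subsequence $u_{j_1},u_{j_2},\dots$. By span-ball, after $r$ basis vectors are chosen, at most $Dr$ ball vectors lie in their span. Hence $j_{r+1}\le Dr+1$, i.e.\ each $u_i$ with $i\le Dr$ is dominated in weight by some earlier basis vector. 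More precisely, $|u_{j_r}|\le |u_i|$ for all $i\ge D(r-1)+1$, since $u_{j_r}$ appears by position $D(r-1)+1$ and the list is sorted. Grouping the $u_i$ into $L$ consecutive blocks of $D$ gives
\[
\sum_{r=1}^{L}(pn-|u_{j_r}|)\ \ge\ \frac1D\sum_{i=1}^{DL}(pn-|u_i|)\ >\ \frac1D\cdot DL\cdot 2\e n=2\e Ln.
\]
This contradicts the entropy-counting bound once $L=C/\e$ with $C$ large. The main obstacle is the indexing. The greedy basis may run out before $L$ vectors if the ball part of the span is small, but then the list has fewer than $DL$ vectors and there is nothing to prove. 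One must also check that span-ball applies to every prefix, which requires sizes up to $L$ and is where \cref{lem:main}'s uniformity over all small independent sets is used.

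Finally, set $C'_{p,q}=DC$ and take $\e$ small and $n$ large so that the error terms are absorbed.
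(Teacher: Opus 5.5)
Your strategy is the paper's own: a greedy basis of the weight-sorted ball vectors of $\spn(\Cc,x)$, span-ball forcing $j_r\le D(r-1)+1$, blocks of size $D$, a contradiction with \cref{lem:random-code-bound}, and list size $DL$. However, two steps fail as written. The first is your dismissal of the case where $\spn(\Cc,x)\cap\Ball(0,p)$ has fewer than $DL$ elements. It is not true that there is then ``nothing to prove.'' Average-radius list-decodability constrains the $DL$ closest vectors whether or not they lie in the ball. So you must still rule out a bad list made of $\kappa<DL$ ball vectors with large total deficit, plus vectors of weight just above $pn$ whose negative deficits are too small to compensate. Counting ball vectors is exactly the standard-list-decoding reasoning that breaks here. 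The paper lets $\kappa\le DL$ be the number of ball vectors among the closest $DL$ and discards the others. Their deficits are negative, so this only increases the sum, which stays normalized by $DL$. It then runs the greedy/block argument on these $\kappa$ vectors, using the first $\lceil\kappa/D\rceil\le L$ greedy vectors. That requires \cref{lem:random-code-bound} in its actual form: total deficit at most $\e Ln$ for every independent set of size \emph{at most} $L$, normalized by $L$ rather than $|S|$. Your version with $\e|S|n$ is false for small $|S|$; take $x$ of weight $1$ and $S=\{x\}$. Padding to size exactly $L$ does not rescue it either. When the ball vectors span fewer than $L$ dimensions, the padding vectors lie outside the ball, and their negative deficits can destroy the contradiction.

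The second is the block $r=1$. The lowest-weight element of $\spn(\Cc,x)\cap\Ball(0,p)$ is always $u_1=0$. You cannot simply exclude it, since $0$ lies in the translated bad list whenever $x\in\Cc$. Since $0$ is never in a linearly independent set, $j_1\ge 2$. Span-ball yields $j_r\le D(r-1)+1$ only for $r\ge 2$, because for $r=1$ the empty span $\{0\}$ already contains a ball vector. So your claim $|u_{j_1}|\le |u_1|$ fails, and the inequality $\sum_r(pn-|u_{j_r}|)\ge \frac{1}{D}\sum_i(pn-|u_i|)$ is false as stated. The paper handles this by bounding the first $D$ deficits trivially by $pn$ each, which costs at most $pn/L\le \e n/10$ in the average. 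This is why the lemma is used at level $\e$ while the theorem is stated at $2\e$. Your slack between $2\e Ln$ and $\e Ln$ would absorb this error too, but it has to be accounted for. (A minor point: no union bound over centers is needed, because both lemmas already hold simultaneously for all $x$. Moreover, a union bound of a per-center failure probability $2q^{-6n}$ over $q^n$ centers would give only $2q^{-5n}$.)
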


\begin{remark} \label{rem:rephrase}
\cref{thm:main-equiv} is equivalent to \cref{thm:main}: since $\Hq$ has derivative $\Theta_{p,q}(1)$ at $p$, replacing $p$ with $p + 2\e$ in \cref{thm:main-equiv} yields a statement of the form of \cref{thm:main} with $\e$ replaced by $\Theta_{p,q}(\e)$. The capacity-rate form (with a rate of $1 - \Hq(p)$ and radius of $p - 2\e$) used in \cref{thm:main-equiv} is more convenient for the analysis below.
\end{remark}

The proof reduces to establishing the following two lemmas. The first asserts that $\Cc$ cannot contain a small linearly independent set whose span captures too many ball elements (even if we add one arbitrary vector to the code).

\begin{lemma} \label{lem:main}
There exists a constant $D = D(p, q)$ such that for all $n$ large enough, with probability at least $1-q^{-6n}$ over the choice of the random linear code $\Cc$, the following holds. For every $x \in \F_q^n$, every $1 \le \ell \le n^{1/3}$, and every linearly independent set $S \subseteq \Ball(0, p)$ of size $\ell$ with $S \subseteq \spn (\Cc , x)$, we have
\[
  \abs{\spn(S) \cap \Ball(0, p)} \le D\ell.
\]
\end{lemma}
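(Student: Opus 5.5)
Our plan is to combine the deterministic span-ball theorem of GHK (\cref{lem:span-ball}, which the overview says supplies $D=C_q$) with a union bound over the random code. Recall the GHK span-ball theorem: for every $p<1-1/q$ there is a constant $C=C_q$ (depending on $p,q$) such that whenever $v_1,\dots,v_\ell\in\Ball(0,p)$ are linearly independent, either $|\spn(v_1,\dots,v_\ell)\cap\Ball(0,p)|\le C\ell$, or the set is ``bad'' in a structured sense. By ``bad'' we mean that there is a sub-configuration, e.g.\ a subset $T\subseteq S$ of size $t$ together with coefficients, whose existence in a random subspace is unlikely. In GHK, the probability that a random linear code of rate $1-\Hq(p)$ contains a given linearly independent $\ell$-tuple of ball vectors is $q^{-(1-R)n\ell}=q^{-\Hq(p)n\ell}$. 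GHK show that the number of $\ell$-tuples in $\Ball(0,p)^\ell$ whose span contains more than $C\ell$ ball vectors is at most $q^{\Hq(p)n\ell-\Omega(n)}$ (more precisely, at most $q^{(\Hq(p)\ell-\delta)n}$ for some $\delta=\delta(p,q)>0$, for $\ell$ up to a polynomial in $n$). We will take this counting form as the content of \cref{lem:span-ball}.

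The first step is to handle the extra vector $x$. A set $S\subseteq\spn(\Cc,x)$ of size $\ell$ that is linearly independent lies in a subspace $\spn(\Cc,x)$. Either $S\subseteq \Cc$, or, after a change of basis of $\spn(S)$ (which does not change $\spn(S)\cap\Ball(0,p)$), we can arrange that all but one element lie in $\Cc$. We avoid re-choosing vectors, since that would leave the ball. Instead we argue directly: the event that $S\subseteq\spn(\Cc,x)$ for a linearly independent $S$ of size $\ell$ implies that $\spn(S)\cap\Cc$ has codimension at most $1$ in $\spn(S)$. Since $\Cc$ is a uniformly random subspace of dimension $Rn$, the probability that a fixed $\ell$-dimensional space $V$ has $\dim(V\cap\Cc)\ge\ell-1$ is at most $q^{\ell}\cdot q^{-(1-R)n(\ell-1)}$. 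This follows by union bounding over the $\le q^\ell$ hyperplanes of $V$, each containing an $(\ell-1)$-tuple of independent vectors that must all lie in $\Cc$. We then union bound over $x$ as well, at cost $q^n$. The budget becomes $q^{n}\cdot q^{\ell}\cdot q^{-\Hq(p)n(\ell-1)}$ per bad tuple, i.e.\ we lose a factor $q^{(1+\Hq(p))n}$ relative to the code-only count.

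The second step is the union bound itself. We sum over $\ell\le n^{1/3}$ and over bad tuples $S$, i.e.\ those with $|\spn(S)\cap\Ball(0,p)|>D\ell$. To absorb the loss $q^{(1+\Hq(p))n+6n}$, we need the number of bad $\ell$-tuples to be at most $q^{\Hq(p)n\ell-(8+\Hq(p))n}$. We therefore invoke the span-ball counting with a larger constant. In GHK the savings $\delta n$ grows when the ratio $D$ is increased: a tuple whose span contains $D\ell$ ball vectors must contain many ``forced'' dependencies, each saving $\Omega(n)$ in the count. So we choose $D=D(p,q)$ large enough that the saving exceeds $(9+\Hq(p))n$. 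The factor $n^{1/3}$ on $\ell$ and the $q^{\ell}$ terms are $q^{o(n)}$ and harmless.

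The main obstacle is exactly this quantitative form of the span-ball theorem: having the saving be an arbitrarily large constant multiple of $n$ (not merely positive) once $D$ is large, uniformly for $\ell\le n^{1/3}$. If \cref{lem:span-ball} only gives a fixed saving, we would first try to bootstrap it. If $\spn(S)$ contains more than $D\ell$ ball vectors, we would greedily extract from those ball vectors several disjoint-ish subsets, each witnessing a violation of the basic theorem with its own $\Omega(n)$ saving. These violations would be combined multiplicatively in the count to reach the needed $(9+\Hq(p))n$ exponent.
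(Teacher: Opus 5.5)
Your proposal is correct, and its core matches the paper's argument. You bound the number of linearly independent $\ell$-tuples in $\Ball(0,p)^\ell$ that violate the span-ball bound by $|\Ball(0,p)|^\ell q^{-10n}\le q^{h_q(p)\ell n-10n}$ using \cref{lem:span-ball}. You then multiply by the probability that a fixed independent tuple lies in the code, and union bound over $\ell\le n^{1/3}$.

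The difference is how you handle $x$. The paper takes $x$ uniform and independent of $\Cc$, so that $\spn(\Cc,x)$ sits inside a uniformly random $(k+1)$-dimensional code $\Cc^*$. It then applies \cref{lem:incl-prob} to $\Cc^*$ and pays a factor $q^n$ to pass back to a statement for all $x$ simultaneously. You instead characterize the event exactly: some $x$ with $S\subseteq\spn(\Cc,x)$ exists iff $\spn(S)\cap\Cc$ has codimension at most $1$ in $\spn(S)$. You bound this by a union over the at most $q^\ell$ hyperplanes of $\spn(S)$, which costs only $q^{h_q(p)n+\ell}$ per tuple, versus the factor $q^n$ the paper spends on averaging. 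Because this event no longer mentions $x$, your further union bound over $x$ is unnecessary, though harmless. Your route is more direct and slightly tighter. The paper's averaging has the advantage of setting up a single random code $\Cc^*$ and a single inclusion lemma that it reuses verbatim for \cref{lem:random-code-bound}.

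On the quantitative point you single out as the main obstacle, the paper does what you anticipate. It states \cref{lem:span-ball} with failure probability $q^{-10n}$, which is GHK's proof with the final constant changed, at the price of a larger $C_q$. Since $h_q(p)<1$, this exceeds the roughly $(8+h_q(p))n$ saving your accounting requires, so your bootstrapping fallback is not needed. As sketched, that fallback would in any case need an additional argument. For instance, you would need the extracted violating configurations to be jointly linearly independent, so that both their counts and their inclusion probabilities factor, and a large value of $|\spn(S)\cap\Ball(0,p)|$ does not obviously guarantee this.
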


The second lemma controls the total weight deficit of any small linearly independent set $\Cc$ (again adding one arbitrary vector to the code). Recall that the \emph{deficit} of a vector $v$ relative to radius $p$ is $pn - |v|$, which is positive precisely when $v$ lies strictly inside $\Ball(0, p)$.

\begin{lemma} \label{lem:random-code-bound}
Fix $p \in (0, 1-1/q)$ and $\e > 0$, and let $\Cc$ be a random linear code of rate $R = 1 - \Hq(p)$. For all $n$ large enough, with probability at least $1-q^{-6n}$ over the choice of the random linear code $\Cc$, the following holds. For every $x \in \F_q^n$ and every linearly independent set $S \subseteq \spn (\Cc , x)$ of size at most $L = 10/(\e \Hq'(p))$,\footnote{Here, $\Hq'$ denotes the derivative of the entropy function $\Hq$.} we have
\[
\frac{1}{L} \sum_{v \in S} (pn - |v|) \le \e n.
\]
\end{lemma}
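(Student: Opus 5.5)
The plan is a first-moment (union bound) argument over all centers $x$ and all candidate sets $S$, with the entropy exponent controlled by concavity of $\Hq$. The key inequality is the tangent-line bound. Since $\Hq$ is concave on $[0,1]$, for every $w \in [0,1]$
\[
  \Hq(w) \le \Hq(p) - \Hq'(p)\,(p - w).
\]
Summing over a set $S=\{v_1,\dots,v_\ell\}$ with $w_i = |v_i|/n$ gives
\[
  \sum_{i} \bigl(\Hq(w_i) - \Hq(p)\bigr)\, n \le -\Hq'(p)\sum_i (pn - |v_i|).
\]
If $S$ violates the conclusion, then $\sum_i (pn-|v_i|) > \e L n$. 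With $L = 10/(\e\Hq'(p))$, the right-hand side is then at most $-10n$. This is the whole source of the gain, and it holds for every $\ell \le L$, not only $\ell = L$.

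\textbf{Counting.} Fix $\ell \le L$ and a weight profile $(w_1,\dots,w_\ell)$; there are at most $(n+1)^L = q^{o(n)}$ profiles. The number of vectors of weight $w_i n$ is at most $q^{\Hq(w_i)n}$ when $w_i \le 1-1/q$, by \cref{lem:hamming-ball-volume}. When $w_i > 1-1/q$, the trivial bound $q^n$ suffices. Indeed, the tangent line at $p$ has positive slope and lies above $\Hq$, so its value at any $w \ge 1-1/q$ is at least $\Hq(1-1/q)=1$. Hence in all cases the count is at most $q^{(\Hq(p) - \Hq'(p)(p-w_i))n}$. Also union bound over the $q^n$ choices of $x$.

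\textbf{Probability for a fixed $x$ and fixed independent $v_1,\dots,v_\ell$.} The event $v_i \in \spn(\Cc,x)$ means $v_i - a_i x \in \Cc$ for some $a_i \in \F_q$. I union bound over the $q^\ell$ choices of $(a_1,\dots,a_\ell)$. The vectors $v_i - a_i x$ span a space of dimension at least $\ell - 1$, since $\spn(S) \subseteq \spn(\{v_i - a_ix\}, x)$. For a random linear code of dimension $k = Rn$, a fixed set spanning dimension $d$ lies entirely in $\Cc$ with probability at most $q^{-(n-k)d}$. This gives probability at most
\[
  q^{\ell}\, q^{-(1-R)n(\ell-1)} = q^{\ell + \Hq(p) n}\, q^{-\Hq(p)n\ell},
\]
using $1-R = \Hq(p)$. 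Losing one factor of $q^{-\Hq(p)n}$ costs at most $q^n$.

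\textbf{Combining.} The $\Hq(p)n\ell$ terms cancel between the count and the probability. The total failure probability is therefore at most
\[
  q^{n}\cdot q^{n}\cdot q^{L + o(n)}\cdot q^{-\Hq'(p)\sum_i(pn - |v_i|)} \le q^{2n + o(n) - 10n} \le q^{-6n}
\]
for $n$ large. The step needing the most care is the probability bound in the presence of the extra vector $x$. The coefficients $a_i$ can destroy linear independence, so I would lean on the dimension-$(\ell-1)$ observation rather than try to keep full independence. The loss of one dimension is absorbed by the slack in the constant $10$.
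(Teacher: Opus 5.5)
Your proof is correct and is essentially the paper's first-moment argument. Your termwise tangent-line bound is exactly the paper's Jensen-plus-concavity step, and it even lets you keep vectors of weight at least $pn$ (including those above $1-1/q$) where the paper simply discards them. The cancellation of the $\Hq(p)\ell n$ term against the inclusion probability is also the same computation.

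The only real difference is how you handle $x$. The paper takes $x$ uniform and independent of $\Cc$, so $\spn(\Cc,x)$ lies in a uniformly random code of dimension $k+1$. Full linear independence then gives probability $q^{-\Hq(p)\ell n+O(\ell)}$, and the paper union bounds over $x$ afterwards. You instead fix $x$, union bound over the coefficients $a_i$, and give up one dimension at a cost of $q^{\ell+\Hq(p)n}$, which the constant $10$ easily absorbs.
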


We remark that \cite{guruswami2021bounds} established average-radius list-decodability of random \emph{binary} linear codes via a potential-function argument specific to the binary alphabet ($q = 2$) that does not appear to generalize to $q > 2$. Here we instead adapt the span-ball machinery of \cite{guruswami2010list} to the average-radius setting for all prime powers $q$.

\subsection{Proof of \texorpdfstring{\cref{lem:main}}{Lemma~\ref{lem:main}}} \label{sec:span-ball-proof}

It is enough to show that the result holds with probability at least $1 - q^{-7n}$ for uniformly random $x \in \F_q^n$ (drawn independently from $\Cc$), since if for some fixed $x$ the result fails with probability at least $q^{-6n}$, then it fails for uniformly random $x$ with probability at least $q^{-7n}$.
Thus we henceforth assume that $x$ is uniformly random. Let $\Cc^* = \spn (\Cc, x)$; note that $\Cc^*$ is a uniformly random linear code of dimension either $k$ or $k+1$ (depending on whether $x \in \Cc$). Indeed, if $\Cc^*$ has dimension $k$, we can add a uniformly random vector to it (and the statement gets even stronger), so we can assume that $\Cc^*$ is a uniformly random linear code of dimension $k+1$.

Our proof uses two black-box results. The first is the span-ball theorem from \cite{guruswami2010list}. We have changed the constant in the exponent from $5$ to $10$, but the proof in \cite{guruswami2010list} is the same, up to changing the constant chosen at the very end of the proof.

\begin{lemma}[Span-ball theorem, {\cite[Theorem 10]{guruswami2010list}}] \label{lem:span-ball}
For every $\eta_0 > 0$ there exists a constant $C_q = \ensuremath{C_q(\eta_0)} > 0$ such that for all $n$ large enough, all $p \in [\eta_0, 1-1/q-\eta_0]$, and $\ell \le n^{1/3}$: if $X_1, \ldots, X_\ell \in \F_q^n$ are i.i.d.\ uniform on $\Ball(0, p)$, then
\[
  \Pr\bigl[\abs{\spn(X_1,\ldots,X_\ell) \cap \Ball(0, p)} > C_q \ell\bigr] \le q^{-10n}.
\]
The constant $C_q$ does not depend on $p$, $\ell$, or $n$.
\end{lemma}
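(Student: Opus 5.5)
The plan is to reduce the event $\abs{\spn(X_1,\ldots,X_\ell)\cap\Ball(0,p)} > C_q\ell$ to the occurrence of a bounded number of "fresh" linear combinations. Each such combination falls in the ball only with exponentially small probability, even conditionally on the previous ones. I would then take a union bound over the at most $q^{O(\ell m)}=q^{o(n)}$ possible coefficient patterns, using $\ell\le n^{1/3}$.

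\textbf{Step 1: discard the trivial combinations.} Every element of the span is $\sum_i a_iX_i$ with $a\in\F_q^\ell$. Coefficient vectors of support size $1$ always give ball vectors, but there are at most $(q-1)\ell$ of them. So if $C_q>q$, at least $(C_q-q)\ell$ coefficient vectors with support size at least $2$ give ball vectors.

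\textbf{Step 2: extract a chain with increasing supports.} From these many coefficient vectors I would extract $a^{(1)},\ldots,a^{(m)}$, with $m=m(q,\eta_0)$ a constant, such that each $a^{(t)}$ has a coordinate $j_t$ in its support on which all earlier $a^{(s)}$, $s<t$, vanish. A greedy argument should work: the vectors supported inside a fixed set of $k$ coordinates number at most $q^k$, and one wants $m$ "new coordinates" out of $(C_q-q)\ell$ vectors. Balancing the coordinate count is where the linear dependence on $\ell$ comes from. This may require passing to vectors whose supports all share a structure, which is Ramsey/sunflower-like, and $C_q$ is chosen large enough in terms of $m$ and $q$ to make it possible.

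\textbf{Step 3: bound the fresh step.} Reveal the $X_i$ in an order that exposes $X_{j_t}$ only when handling $a^{(t)}$. Conditioned on the previous events, $a^{(t)}\cdot X = a^{(t)}_{j_t}X_{j_t}+w$, where $X_{j_t}$ is uniform on the ball and $w$ is a fixed vector. I then need
\[
  \Pr\bigl[cX+w\in\Ball(0,p)\bigr]\le q^{-\delta n}
\]
with $\delta=\delta(q,\eta_0)>0$. Multiplying over $t$ and union bounding over the $q^{\ell m}$ choices of coefficients and the choice of indices gives $q^{-\delta mn+o(n)}$. Taking $m=\lceil 11/\delta\rceil$ yields the $q^{-10n}$ bound (this is where GHK's $5$ changes to $10$).

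\textbf{Main obstacle.} The bound in Step 3 fails when $w$ itself has small weight, for instance $w=0$ or $|w|=o(n)$: then $cX+w$ is in the ball with constant probability. For $|w|\ge \gamma n$, a standard computation comparing the expected weight of $cX+w$ with $pn$ should work. The weight of a random ball vector concentrates at exactly $pn$, and adding $w$ on a linear-size support pushes the typical weight strictly above $pn$ because $p<1-1/q$; this gives an exponential bound via a volume/entropy estimate. So the hard part is ensuring that the partial combinations $w=\sum_{i\ne j_t}a^{(t)}_iX_i$ are not light. I would try to handle this by building the exclusion into Step 2. Conditioned on the earlier $X_i$, a nonzero combination of at least two of them is itself light only with probability $q^{-\Omega(n)}$ (by induction, via the same one-step estimate). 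Alternatively, I would choose the chain so that $w$ is a combination that has already been certified heavy. Getting this bookkeeping right while keeping $C_q$ independent of $p$, $\ell$ and $n$ is the step I expect to require GHK's finer geometric arguments.
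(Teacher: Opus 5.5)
\paragraph{Verdict.} The paper gives no proof of this lemma. It imports Theorem~10 of GHK~\cite{guruswami2010list} and only remarks that the exponent $10$ (instead of $5$) comes from the constant chosen at the end of their argument. Your sketch therefore has to stand on its own, and it has two genuine gaps: Step~2 is false, and your proposed fix for Step~3 cannot reach $q^{-10n}$.

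\paragraph{Step 2 is false as stated.} Let $S\subseteq\F_q^\ell$ consist of all $0/1$ vectors with exactly two zero coordinates.
\begin{itemize}
\item $|S|=\binom{\ell}{2}$, which exceeds $(C_q-q)\ell$ for any constant $C_q$ once $\ell$ is large, and $\ell$ may be as large as $n^{1/3}$.
\item Every element has support of size $\ell-2\ge 2$.
\item Yet any first element already covers all but two coordinates. So every chain in which each $a^{(t)}$ has a coordinate outside the supports of its predecessors has length at most $3$.
\end{itemize}
Your heuristic that only $q^k$ vectors live on $k$ coordinates bounds $|S|$ only if the union of supports along a maximal chain stays bounded. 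That fails as soon as supports are large. No Ramsey or sunflower refinement of $S$ in the fixed basis $X_1,\dots,X_\ell$ can repair this, so a genuinely different extraction is needed.

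One possible source of extra freedom: conditioned on $X_1,\dots,X_\ell$ being linearly independent with span $U$, the tuple is uniform over ordered bases of $U$ drawn from $U\cap\Ball(0,p)$. There are at most $q^{\ell^2}=q^{o(n)}$ such bases, so you can afford to pass to a favorable basis. In a configuration realizing the example, the ball vectors of the span are $\sum_{i\ne a,b}X_i$. Re-expressing these in the basis $X_1,\dots,X_{\ell-1},\sum_{i\ge 3}X_i$ does give chains of length about $\ell/2$ with two new coordinates per step. But turning this into a proof requires a basis-free extraction lemma that you have not supplied. That lemma is where the substance of the span-ball theorem lies.

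\paragraph{Step 3 cannot be handled the way you propose.} Light partial combinations cannot be ruled out by a separate bound at this precision. Already
$\Pr[|X_1-X_2|\le\gamma n]\ge\Pr[X_1=X_2]=1/|\Ball(0,p)|\ge q^{-n}$,
which is far more than $q^{-10n}$. They must be paid for inside the chain itself.

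The standard remedy is to demand \emph{two} fresh vectors per step. For independent $X,X'$ uniform on $\Ball(0,p)$, nonzero $a,b\in\F_q$, and any fixed $w$, one has $\Pr[aX+bX'+w\in\Ball(0,p)]\le q^{-\delta n}$ uniformly in $w$. The reason is that $bX'+w$ has weight below $\gamma n$ with probability at most $|\Ball(0,\gamma)|/|\Ball(0,p)|$, and otherwise your one-step estimate applies. But this requires Step~2 to produce chains with two new coordinates per step, which makes the extraction problem above strictly harder.

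\paragraph{Summary.} Your outline has the right skeleton: fresh variables, constant-length chains, a $q^{O(\ell m)}=q^{o(n)}$ union bound, and $m\approx 11/\delta$. But both steps that carry the actual content of the theorem are missing.
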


The second tool controls the probability that a fixed linearly independent tuple falls inside a random subspace.

\begin{lemma} \label{lem:incl-prob}
Let $\Cc^* \subseteq \F_q^n$ be a uniformly random linear code of dimension $k+1$ where $k = (1-h_q(p))n$. For any linearly independent $v_1,\ldots,v_\ell \in \F_q^n$,
\[
  \Pr\bigl[v_1,\ldots,v_\ell \in \Cc^*\bigr] \le q^{-\Hq(p)\ell n + O(\ell)}.
\]
\end{lemma}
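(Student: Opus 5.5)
We need to prove \cref{lem:incl-prob}: for a uniformly random subspace $\Cc^*$ of dimension $k+1$ and fixed linearly independent $v_1,\dots,v_\ell$, the probability that all $v_i$ lie in $\Cc^*$ is at most $q^{-\Hq(p)\ell n+O(\ell)}$.

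The plan is to count subspaces exactly. By symmetry under $\mathrm{GL}_n(\F_q)$, which acts transitively on $\ell$-tuples of linearly independent vectors and preserves the uniform distribution on $(k+1)$-dimensional subspaces, the probability depends only on $\ell$. It equals the fraction of $(k+1)$-dimensional subspaces containing the fixed $\ell$-dimensional subspace $U=\spn(v_1,\dots,v_\ell)$. (If $\ell>k+1$ the probability is $0$ and there is nothing to prove.) Subspaces $W\supseteq U$ of dimension $k+1$ correspond bijectively to $(k+1-\ell)$-dimensional subspaces of $\F_q^n/U\cong\F_q^{n-\ell}$. Hence the probability is
\[
\frac{\binom{n-\ell}{k+1-\ell}_q}{\binom{n}{k+1}_q}
=\prod_{i=0}^{\ell-1}\frac{q^{k+1-i}-1}{q^{n-i}-1},
\]
where the product form comes from expanding the Gaussian binomials as ratios of $\prod (q^{a}-1)$ and cancelling.

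Next, bound each factor. For $0\le i<\ell\le k+1$ we have $q^{k+1-i}-1\le q^{k+1-i}$ and $q^{n-i}-1\ge \tfrac12 q^{n-i}$, so each factor is at most $2q^{k+1-n}$. The product is then at most $2^\ell q^{-(n-k-1)\ell}$. Substituting $k=(1-\Hq(p))n$ gives $n-k-1=\Hq(p)n-1$, so the bound is $q^{-\Hq(p)\ell n+\ell+\ell\log_q 2}=q^{-\Hq(p)\ell n+O(\ell)}$, with the implied constant absolute (at most $2$).

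An alternative route, if one prefers to avoid Gaussian binomials, is to model $\Cc^*$ as the kernel of a uniformly random full-rank $(n-k-1)\times n$ parity-check matrix $H$. Then $v_1,\dots,v_\ell\in\Cc^*$ iff $Hv_i=0$ for all $i$. For a uniformly random (not necessarily full-rank) $H$ the events are independent across a basis, giving probability exactly $q^{-(n-k-1)\ell}$. Conditioning on full rank changes this by at most a factor $1/\Pr[\text{full rank}]\le 4$. Either way there is no real obstacle; the only step needing care is the symmetry/bijection argument that reduces to counting, and the rounding of $k$ to an integer, which is absorbed in the $O(\ell)$ term.
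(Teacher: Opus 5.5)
Your proposal is correct and is essentially the paper's argument. The paper multiplies the conditional probabilities $\Pr[v_i\in\Cc^*\mid v_1,\dots,v_{i-1}\in\Cc^*]\le q^{k+1-n}$, and your Gaussian-binomial ratio $\prod_{i=0}^{\ell-1}\frac{q^{k+1-i}-1}{q^{n-i}-1}$ is exactly that chain-rule product written in closed form. As a minor point, the $2^\ell$ loss is unnecessary, since $(q^a-1)/(q^b-1)\le q^{a-b}$ whenever $a\le b$, so each factor is already at most $q^{k+1-n}$.
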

\begin{proof}
For each $i$, we have $\Pr[v_i \in \Cc^* \mid v_1,\ldots,v_{i-1} \in \Cc^*] = q^{k+1-n}$, since this is the probability that a vector lies in a random subspace of dimension $k+1-i$ of a $(n-i)$-dimensional space. Therefore,
\[
  \Pr[v_1,\ldots,v_\ell \in W] \le \bigl(q^{k+1-n}\bigr)^{\ell} = q^{-\Hq(p)\ell n + O(\ell)},
\]
using $n - k - 1 = \Hq(p)n + O(1)$.
\end{proof}

\begin{proof}[Proof of \cref{lem:main}]
Let $D := C_q$ be the constant from the span-ball theorem (\cref{lem:span-ball}).
Fix $\ell \ge 1$ and define the set of \emph{bad tuples}
\[
  F_\ell(p) :=\Bigl\{(v_1,\ldots,v_\ell) \in \Ball(0,p)^\ell \;:\; v_i \text{ linearly independent},\;\abs{\spn(v_1,\ldots,v_\ell) \cap \Ball(0,p)} > D\ell\Bigr\}.
\]
Since every tuple in $F_\ell(p)$ witnesses a span-ball violation, \cref{lem:span-ball} gives
\[
  \abs{F_\ell(p)} \le \abs{\Ball(0,p)}^\ell \cdot q^{-10n} \le q^{\Hq(p)\ell n - 10n},
\]
where we used $\abs{\Ball(0,p)} \le q^{\Hq(p)n}$. Applying \cref{lem:incl-prob} and taking a union bound over all $x \in \F_q^n$ and all tuples in $F_\ell(p)$, we get
\begin{align*}
  \Pr\bigl[(v_i) \in F_\ell(p) \text{ with } v_i \in \Cc^* \;\forall\, i\bigr]
  &\le q^n \cdot q^{\Hq(p)\ell n - 10n } \cdot q^{-\Hq(p)\ell n + O(\ell)} \\
  &= q^{-9n + o(n) + O(\ell)}.
\end{align*}
Since $\ell \le n^{1/3}$ (so $O(\ell) = o(n)$), this probability is at most $q^{-8n}$ for all $n$ sufficiently large. Taking a union over all $\ell \le n^{1/3}$:
\[
  \Pr[\x{\cref{lem:main} fails}] \le n^{1/3} \cdot q^{-8n} \le q^{-7n}. \qedhere
\]
\end{proof}

\subsection{Proof of \texorpdfstring{\cref{lem:random-code-bound}}{Lemma~\ref{lem:random-code-bound}}}

To prove this lemma, we will use a counting argument to bound the number of sets $S$ with the given property.

As in \cref{sec:span-ball-proof}, we let $x$ be uniformly random and work instead with $\Cc^* \supseteq\spn (\Cc ,x)$, which is a uniformly random linear code of dimension $k+1$; it is enough to show that the conclusion of \cref{lem:random-code-bound} holds with probability at least $1 - q^{-7n}$.

Set $L := 10/(\e\Hq'(p))$.

\begin{proof}[Proof of \cref{lem:random-code-bound}]
For each $\ell \in \{1,\ldots,L\}$, let $\Bb_\ell$ be the event that there exist $x \in \F_q^n$ and a linearly independent $\ell$-tuple $(v_1,\ldots,v_\ell) \subseteq \Cc^*$ with $|v_i| < pn$ for all $i$ and
\[
  \sum_{i=1}^{\ell}(pn - |v_i|) > \e L n.
\]
We wish to bound the probability that $\Bb_\ell$ occurs for some $\ell$. (Note that we were able to make the assumption $|v_i| < pn$ for all $i$ since otherwise we can just remove all such $v_i$ from the tuple and get a smaller tuple that also witnesses some $\Bb_{\ell'}$.)

Define $G_\ell$ to be the set of all linearly independent $\ell$-tuples satisfying this deficit condition; by \cref{lem:incl-prob}, we have
\[
  \Pr[\Bb_\ell] \le \abs{G_\ell} \cdot q^{-\Hq(p)\ell n + O(\ell)}.
\]
Since each deficit satisfies $pn - |v_i| \le pn$, the total deficit is at most $\ell pn$. For $\ell \le \e L/p$ the deficit condition $\sum_i (pn - |v_i|) > \e Ln$ is unsatisfiable, so $G_\ell = \emptyset$ and $\Pr[\Bb_\ell] = 0$.

For $\ell > \e L/p$, the deficit condition forces
\[
  \frac{1}{\ell}\sum_{i=1}^\ell \f{|v_i|}{n} < p - \frac{\e L}{\ell}.
\]
By Jensen's inequality on $\Hq$:
\begin{equation} \label{eq:jensen}
  \sum_{i=1}^{\ell} \Hq\!\left(\frac{|v_i|}{n}\right) \le \ell \cdot \Hq\!\left(\frac{1}{\ell}\sum_i \frac{|v_i|}{n}\right) < \ell \cdot \Hq\!\left(p - \frac{\e L}{\ell}\right).
\end{equation}
Let $(r_1, \dots, r_\ell) = (|v_1|, \dots, |v_\ell|)$ be the weight profile of the tuple (where $r_i \le pn$ for all $i$). By \cref{lem:hamming-ball-volume}, the number of tuples (even including the linearly dependent tuples) with this weight profile is at most $q^{(\Hq(r_1/n) + \cdots + \Hq(r_\ell/n))n}$. Thus, summing over all weight profiles satisfying \eqref{eq:jensen} (of which there are trivially at most $(n+1)^\ell$), we get
\begin{equation} \label{eq:count-bound}
  \abs{G_\ell} \le q^{\ell n \cdot \Hq(p - \e L/\ell) + O(\ell \log n)}.
\end{equation}
By concavity of $\Hq$, we have
\begin{align*}
\Hq(p - \e L/\ell)
&\le \Hq(p) - \frac{\e L}{\ell} \Hq'(p) \\
&\le \Hq(p) - \frac{10}{\ell},
\end{align*}
so plugging this into \eqref{eq:count-bound} gives
\[\abs{G_\ell} \le q^{\Hq(p)\ell n - 10n + O(\ell \log n)}.\]
Combining this with the earlier bound on $\Pr[\Bb_\ell]$ gives
\begin{align*}
  \Pr[\Bb_\ell] &\le q^n \cdot q^{\Hq(p)\ell n - 10n + O(\ell \log n)} \cdot q^{-\Hq(p)\ell n + O(\ell)} \\
  &= q^{-9n + O(\ell \log n)} \\
  &\le q^{-8n},
\end{align*}
for $n$ large. Taking a union bound over $\ell$, the overall probability of failure of \cref{lem:random-code-bound} is at most $L \cdot q^{-8n} \le q^{-7n}$, as desired.
\end{proof}

\subsection{Proof of \texorpdfstring{\cref{thm:main-equiv}}{Theorem~\ref{thm:main-equiv}}}

We condition on the simultaneous occurrence of the events of \cref{lem:main,lem:random-code-bound}; by a union bound, both hold with probability at least
$1 - q^{-6n} - q^{-6n} \ge 1 - q^{-5n}$.

We pick $C'_{p,q} := 10D/\Hq'(p)$, where $D$ is the constant from \cref{lem:main} and $L := 10/(\e\Hq'(p))$ is the size bound from \cref{lem:random-code-bound}, so that $C'_{p,q}/\e = DL$. The target list size is thus $DL$ rather than $L$, with the factor $D$ arising from the greedy-basis reduction below.

We wish to show that, for each $x$, the average distance of the closest $DL$ codewords to $x$ is greater than $(p - 2\e)n$. Note that we can instead show that the average distance of the closest $DL$ elements of $\Cc^* = \spn (C,x)$ to $0$ is greater than $(p - 2\e)n$, since for every $v \in \Cc$, we have $v - x \in \Cc^*$, and including additional elements in $\Cc^*$ beyond those in $\Cc - x$ only decreases the average distance of the closest elements to $0$.

Let $v_1, \dots, v_\kappa$ be the $\kappa$ nearest elements of $\Cc^*$ to $0$, ordered by nondecreasing distance so that $|v_1| \le \cdots \le |v_\kappa|$, where $\kappa$ is the largest value satisfying $\kappa \le DL$ and $|v_\kappa| \le pn$. It suffices to show
\begin{equation*}
\frac{1}{DL} \sum_{i=1}^\kappa (pn - |v_i|) < 2\e n,
\end{equation*}
since if $\kappa < DL$, then elements beyond $v_\kappa$ have $|v_i| > pn$, so $pn - |v_i| < 0$, and including them only decreases the sum. We assume $\kappa > D$, since otherwise the left hand side above is at most $p n / L$ which is at most\footnote{We have $p/L \le p\e\Hq'(p)/10 \le \e/10 < \e$, which follows from the bound $p\Hq'(p) \le \Hq(p) \le 1$; the latter holds by the concavity of $\Hq$, since $0 = \Hq(0) \le \Hq(p) - p\Hq'(p)$ implies $p\Hq'(p) \le \Hq(p)$.} $\e n$.

Let $v_{j_1}, \ldots, v_{j_m}$ (with $j_1 < \cdots < j_m$) be a greedily chosen maximal linearly independent subset of $\{v_1, \ldots, v_\kappa\}$, so that $v_i \in \spn\{v_{j_1},\ldots,v_{j_r}\}$ for all $i < j_{r+1}$.

\begin{lemma} \label{lem:greedy}
For each $1 < r \le \lceil \kappa / D \rceil$, we have $j_r \le D(r-1) + 1$.
\end{lemma}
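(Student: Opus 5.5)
We argue by contradiction, using \cref{lem:main} applied to the greedy basis prefix. Fix $r$ with $1 < r \le \lceil \kappa/D \rceil$ and suppose $j_r > D(r-1)+1$, i.e.\ $j_r \ge D(r-1)+2$. The point is that every vector appearing before $j_r$ in the ordered list must lie in the span of the first $r-1$ greedy basis vectors, and there are too many of them.

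\textbf{Step 1 (all early vectors lie in a small span).} By the greedy construction, $v_i \in \spn\{v_{j_1},\ldots,v_{j_{r-1}}\}$ for all $i < j_r$. If fewer than $r$ basis vectors exist at all (i.e.\ $m < r$), then every one of $v_1,\ldots,v_\kappa$ lies in this span; otherwise this holds for $v_1,\ldots,v_{j_r-1}$. In either case, the span contains at least $\min(\kappa, D(r-1)+1)$ distinct vectors $v_i$.

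\textbf{Step 2 (these vectors are in the ball).} Each $v_i$ with $i\le\kappa$ satisfies $|v_i|\le pn$, so $v_i\in\Ball(0,p)$. In particular, $S:=\{v_{j_1},\ldots,v_{j_{r-1}}\}$ is a linearly independent subset of $\Ball(0,p)\cap\spn(\Cc,x)$ of size $\ell=r-1$.

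\textbf{Step 3 (apply \cref{lem:main}).} We need $\ell = r-1 \le n^{1/3}$. This holds because $r \le \lceil \kappa/D\rceil \le L+1$, and $L = 10/(\e\Hq'(p))$ is a constant while $n$ is large. So \cref{lem:main} gives $|\spn(S)\cap\Ball(0,p)| \le D(r-1)$.

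\textbf{Step 4 (contradiction).} By Step 1 the span contains at least $D(r-1)+1$ distinct ball vectors whenever $\kappa \ge D(r-1)+1$. This condition holds since $r \le \lceil \kappa/D\rceil$ implies $r-1 < \kappa/D$, i.e.\ $D(r-1) < \kappa$. That contradicts Step 3, so $j_r \le D(r-1)+1$.

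The main subtlety is the case where the greedy basis has fewer than $r$ elements. This is handled by the same count, since then all $\kappa > D(r-1)$ vectors lie in the span of at most $r-1$ independent ones; apply \cref{lem:main} with $S$ equal to the whole basis of size $m \le r-1$, so $Dm \le D(r-1) < \kappa$. As a consequence, $j_r$ exists for every $r$ in the stated range.
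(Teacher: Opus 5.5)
Your proof is correct and follows the same route as the paper: assume $j_r > D(r-1)+1$, note that $v_1,\ldots,v_{D(r-1)+1}$ all lie in $\spn\{v_{j_1},\ldots,v_{j_{r-1}}\}\cap\Ball(0,p)$, and contradict \cref{lem:main}. Your extra checks are valid and make the argument more self-contained. These are $r-1\le n^{1/3}$, $\kappa > D(r-1)$, and the case $m<r$; the paper handles the last of these in the paragraph right after the lemma.
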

\begin{proof}
Suppose for contradiction that $j_r > D(r-1) + 1$. Then $v_i \in \spn\{v_{j_1},\ldots,v_{j_{r-1}}\}$ for all $i \le D(r-1) + 1$, giving $\abs{\spn\{v_{j_1},\ldots,v_{j_{r-1}}\} \cap \Ball(0, p)} > D(r-1)$, which contradicts \cref{lem:main}.
\end{proof}

We first verify that $m \ge \lceil \kappa/D \rceil$: if $m < \lceil \kappa/D \rceil$, then all of $v_1,\ldots,v_\kappa$ lie in $\spn\{v_{j_1},\ldots,v_{j_m}\}$, giving $\abs{\spn\{v_{j_1},\ldots,v_{j_m}\} \cap \Ball(0,p)} \ge \kappa > Dm$, contradicting \cref{lem:main} (with $\ell = m \ge 1$). Since $\kappa \le DL$, we have $\lceil \kappa/D \rceil \le L$, so \cref{lem:random-code-bound} applies to the independent set $\{v_{j_1},\ldots,v_{j_{\lceil \kappa/D \rceil}}\}$.

By \cref{lem:greedy}, for each $i > D$ we have $i \ge j_{\ceil{i/D}}$ and hence $|v_i| \ge |v_{j_{\ceil{i/D}}}|$ (since the $v_i$ are sorted by distance). Therefore,
\begin{align*}
\frac{1}{DL} \sum_{i=1}^\kappa (pn - |v_i|)
&= \frac{1}{DL} \sum_{i=1}^{D} (pn - |v_i|) + \frac{1}{DL} \sum_{i=D + 1}^\kappa (pn - |v_i|) \\
&\le \frac{p}{L} \cdot n + \frac{1}{DL} \sum_{i=D + 1}^\kappa (pn - |v_{j_{\ceil{i/D}}}|) \\
&\le \frac{p}{L} \cdot n + \frac{1}{DL} \sum_{r=2}^{\ceil{\kappa/D}} D\,(pn - |v_{j_r}|) \\
&\le \frac{p}{L} \cdot n + \frac{1}{L} \sum_{r=1}^{\ceil{\kappa/D}} (pn - |v_{j_r}|) \\
&\le \frac{p}{L} \cdot n + \e n \qquad \text{(by \cref{lem:random-code-bound})} \\
&< 2\e n.
\end{align*}
Here the last step uses $p/L < \e$, as noted in the prior footnote.

\section{Open Problems}

We conclude with some related open problems.

\paragraph{Better/tight list size bounds.}
Our main theorem gives list size $C_{p,q}/\e$ for average-radius list-decoding at rate $1 - \Hq(p) - \e$. For binary random linear codes, Li and Wootters~\cite{li2020improved} showed that standard list-decoding achieves list size $h(p)/\e + 2$; this was extended in~\cite{guruswami2021bounds} to the average-radius setting, achieving list size $\lfloor h(p)/\e \rfloor + 2$ with the same leading constant $h(p)$. 
In contrast, the constant $C_{p,q}$ in the list-size bound $C_{p,q}/\eps$ in both the GHK proof for list-decoding and our result, is much larger, and in particular blows up for $p \to 1-1/q$. 

An outstanding question is whether one can improve the list-size bound to $C/\eps$ for an absolute constant $C$, and possibly even achieve $C \approx h_q(p)$, to match what is known for the binary case; this is not known for (normal) list decoding either. 
We remark that a proof that a random linear code over $\F_q$ of rate $1-h_q(p)-\eps$ is $(p,C/\eps)$-list-decodable, for some absolute constant $C$, would give a unified statement that applies also to the high-noise regime. In particular, it will imply that random $\F_q$-linear codes of rate $\Omega(\gamma^2)$ are $(1-1/q-\gamma, O(1/\gamma^2))$-list-decodable w.h.p., a fact which is only known
via rather different proofs based high-dimensional probability~\cite{cheraghchi2013restricted,wootters2013list,RudraW14}.

For \emph{arbitrary} (not necessarily linear or random) codes, the combinatorial gap between the lower bound $L \ge \Omega_p(1/\sqrt{\eps})$ and the upper bound $L = O(1/\eps)$ from~\cite{guruswami2014combinatorial} for $(p,L)$-average-radius list-decodable codes of rate $1-h(p)-\eps$ remains open (even for the binary case). Determining the correct polynomial dependence for the optimal codes is a separate central question.

\medskip\noindent\textbf{Average-radius list-recovery.}
A recent work of Doron, Mosheiff, Resch, and Ribeiro~\cite{doron2025list} achieved output list sizes \emph{linear} in $1/\e$ for \emph{standard} (non-average-radius) \emph{list-recovery} of random linear codes over small (constant-size) fields, breaking the Zyablov--Pinsker barrier. Obtaining an average-radius version of their result would be interesting. In particular, can
the Jensen-based counting of the present paper---which handles the average-radius structure directly---be combined with an appropriate analog of the span-ball theorem for list-recovery input sets to recover polynomial list sizes in the average-radius setting?

\medskip\noindent\textbf{Extension to the rank metric.}
Guruswami and Resch~\cite{guruswami2018list} proved that random $\F_q$-linear rank-metric codes (a subspace consisting of $n \times n$ matrices over $\F_q$) of rate within $\eps$ of capacity (i.e., rate $(1-p)^2-\eps$) are list-decodable with list-size $O_{p,q}(1/\eps)$ up to radius $p n$ in the rank-metric. 
Their proof adapts the GHK approach to the rank-metric. Can one prove an average-radius version of their list-decoding result?

\section*{AI Disclosure}
Claude Sonnet 4.6 (Anthropic) and ChatGPT Pro 5.5 (OpenAI) were used to assist with proof checking. The authors are responsible for the correctness and originality of all content including references.
\bibliography{main}
\bibliographystyle{alpha}
\end{document}